\documentclass[conference]{IEEEtran}
\usepackage{cite}
\usepackage{amsmath,amssymb,amsfonts}
\usepackage{amsthm}
\usepackage{mathtools}
\usepackage{graphicx}
\usepackage{booktabs}
\usepackage{array}
\usepackage{multirow}
\usepackage{makecell}
\usepackage{url}
\usepackage{listings}
\usepackage{xcolor}
\usepackage{pifont}
\usepackage{tikz}
\usepackage{pgfplots}
\pgfplotsset{compat=1.16}
\usetikzlibrary{arrows.meta,positioning,shapes.geometric,shapes.misc,fit,calc,backgrounds}

\newtheorem{theorem}{Theorem}
\newtheorem{lemma}{Lemma}
\newtheorem{corollary}{Corollary}
\theoremstyle{definition}
\newtheorem{definition}{Definition}

\definecolor{appfill}{RGB}{223,236,248}
\definecolor{appline}{RGB}{30,90,146}
\definecolor{ctrfill}{RGB}{221,242,236}
\definecolor{ctrline}{RGB}{18,120,104}
\definecolor{ledfill}{RGB}{237,238,241}
\definecolor{ledline}{RGB}{88,90,96}
\definecolor{acc}{RGB}{176,42,42}
\definecolor{gold}{RGB}{181,138,0}
\definecolor{ok}{RGB}{26,127,55}
\definecolor{barfill}{RGB}{30,90,146}
\definecolor{barfill2}{RGB}{176,42,42}
\definecolor{linkblue}{RGB}{0,63,175}

\newcommand{\cmark}{\textcolor{ok}{\ding{51}}}
\newcommand{\xmark}{\textcolor{acc}{\ding{55}}}
\newcommand{\pmark}{\textcolor{gold}{\ding{115}}}

\lstdefinelanguage{Solidity}{
  keywords={contract, function, modifier, mapping, address, uint256, uint64, uint32, string, bytes32, bool, struct, enum, require, emit, event, public, external, internal, view, returns, calldata, memory, storage, if, else, delete, constructor, indexed, true, false},
  sensitive=true,
  comment=[l]{//},
  morestring=[b]"
}
\usepackage[
  colorlinks=true,
  linkcolor=linkblue,
  citecolor=linkblue,
  urlcolor=linkblue,
  breaklinks=true,
  bookmarksnumbered=true,
  pdftitle={Making Duplicate Reimbursement Unrepresentable: A Verified Ethereum E-Invoice System for Humans and AI Agents},
  pdfauthor={Jia Cai},
  pdfkeywords={blockchain; Ethereum; smart contract; electronic invoice; non-fungible token; formal model; autonomous agent; tax administration}
]{hyperref}

\newcommand{\artifacturl}{https://github.com/jeffjiacai/ethereum-e-invoice-system}

\begin{document}

\title{Making Duplicate Reimbursement Unrepresentable:\\A Verified Ethereum E-Invoice System\\for Humans and AI Agents}

\author{\IEEEauthorblockN{Jia Cai}
\IEEEauthorblockA{College of Engineering and Computing \\
George Mason UNiversity\\
Fairfax, VA, USA\\
jcai8@gmu.edu}}

\maketitle

\begin{abstract}
Electronic invoices are replacing paper invoices worldwide, but today's centralized e-invoice architectures leave three problems unsolved on the consumption side: an e-invoice can be printed and submitted for reimbursement repeatedly, its authenticity is hard for recipients to verify, and invoice data is siloed at a central authority that becomes both a performance bottleneck and a single point of failure. This paper presents the design, formal analysis, and implementation of a complete blockchain-based electronic invoice system on Ethereum. We give a formal model of the invoice lifecycle as a guarded labeled transition system and prove, under standard cryptographic and consensus assumptions, that the system guarantees (i) reimbursement uniqueness---an invoice can be reimbursed at most once, even across mutually distrusting organizations; (ii) face integrity---any invoice that passes verification equals the recorded one unless keccak256 second-preimage resistance is broken; and (iii) authorization soundness for every lifecycle operation. The core state-machine invariants are additionally \emph{machine-checked} with the Solidity SMTChecker, which proves them inductively over all reachable transaction sequences. The design models each invoice as a non-fungible, \emph{non-tradable} token whose ownership and state change only through five lifecycle subsystems, and a lock-based reimbursement protocol makes duplicate reimbursement unrepresentable rather than merely detectable. We implement the design as a Solidity~0.8 smart contract with a four-role web application and evaluate it on a private Ethereum network: issuing an invoice costs 646{,}773~gas, the complete reimbursement protocol costs under 135{,}000~gas, every core operation is $O(1)$ in the number of invoices, and a single development node sustains 137~invoice issuances per second. Finally, we show that the verified contract doubles as a \emph{safety envelope for autonomous AI agents}: an LLM-based reimbursement agent operating a registered account is exactly the adversary of our threat model, so agent safety follows as a corollary of the proved theorems, and in an end-to-end case study every unsafe agent action --- duplicate, over-limit, or forged-receipt claims, even with the agent's own policy layer bypassed --- is rejected, ultimately by the contract itself. The full implementation, agent runtime, test suite, and benchmarks are open source. The results indicate that lifecycle-complete, formally-grounded invoice management on a blockchain is practical and directly eliminates the duplicate-reimbursement and verification pain points of centralized designs.
\end{abstract}

\begin{IEEEkeywords}
blockchain, Ethereum, smart contract, electronic invoice, non-fungible token, formal model, autonomous agent, tax administration
\end{IEEEkeywords}

\section{Introduction}
Invoices are the primary written evidence of commercial transactions. They anchor bookkeeping and expense reimbursement inside firms, value-added tax (VAT) collection by tax authorities, audit by regulators, and evidence in litigation. Their digitization is accelerating globally: China has operated a nationwide VAT e-invoice system since 2015~\cite{sta2015}, and the European Union's ``VAT in the Digital Age'' package, adopted in March 2025, mandates structured electronic invoicing for intra-EU B2B transactions by 2030, projecting fraud reductions of up to €11~billion per year~\cite{vida2025,ecvida2025}.

Existing e-invoice systems are centralized: a tax authority (or its service provider) issues, records, and validates all invoices. This architecture digitizes the \emph{issuing} side effectively but neglects the \emph{consumption} side, where three pain points persist.

\textbf{Duplicate reimbursement.} An electronic invoice is a file that can be printed or forwarded any number of times. An employee can submit the same invoice for reimbursement at two employers, or twice at the same employer through different channels. Finance departments today defend against this with manually maintained registers of already-reimbursed invoice numbers---a process that is error-prone, unauditable, and does not work across organizations at all.

\textbf{Costly verification.} Paper invoices carried physical anti-forgery features; a printed e-invoice carries none. Recipients must query the authority's central platform to verify authenticity, and in practice most do not, so forged and altered e-invoices circulate. Some firms respond by refusing e-invoices above certain amounts, undermining adoption.

\textbf{Central bottleneck and data silo.} All invoice data converges on the authority's platform, which must be provisioned for peak nationwide load, becomes a single point of failure, and gives downstream users (buyers, auditors, banks, courts) no direct, tamper-evident access to invoice state.

These pain points are precisely the trust and state-sharing problems that blockchains address~\cite{zheng2017,christidis2016}. A blockchain provides a replicated, append-only ledger whose state transitions are validated by consensus rather than by a single operator; a smart contract platform such as Ethereum~\cite{buterin2014,wood2014} additionally lets the invoice lifecycle itself be encoded as executable rules that no participant can bypass. If every invoice is a unique on-chain asset whose reimbursement is a one-way state transition, duplicate reimbursement is not merely detectable---it is \emph{unrepresentable}.

This paper designs, formally analyzes, implements, and evaluates such a system. Concretely, we make six contributions:

\begin{enumerate}
\item \textbf{Requirements and architecture.} We analyze the business requirements of electronic invoicing from the perspectives of the tax authority, the invoice issuer (seller), the recipient (buyer), and third-party verifiers, and derive a five-subsystem architecture covering the complete invoice lifecycle (Section~\ref{sec:arch}).
\item \textbf{Formal model.} We formalize the system as a guarded labeled transition system over an explicit global state, with an invoice lifecycle automaton and a complete table of transition rules (Section~\ref{sec:model}).
\item \textbf{Security proofs, machine-checked.} Under an explicit threat model, we prove reimbursement uniqueness, face integrity, red-flush value conservation, and authorization soundness, and mechanically verify the core state-machine invariants with the Solidity SMTChecker (Section~\ref{sec:security}).
\item \textbf{Complete implementation.} We realize the design as a single auditable Solidity~0.8.24 contract that models invoices as non-fungible, \emph{non-tradable} tokens, together with a four-role web application, released as open source (Sections~\ref{sec:design}--\ref{sec:impl}).
\item \textbf{Evaluation.} We evaluate functional correctness (23 unit tests spanning all subsystems), per-operation gas cost, asymptotic complexity, latency, and throughput on a private Ethereum network (Section~\ref{sec:eval}).
\item \textbf{Agentic case study.} We build an autonomous reimbursement agent (hybrid deterministic/LLM reasoning) and a ledger-audit agent on top of the contract, show that agent safety is a corollary of the proved theorems --- the agent is a registered account, i.e., the modeled adversary --- and demonstrate end-to-end that unsafe agent actions are rejected even when the agent's own policy layer is bypassed (Section~\ref{sec:agents}).
\end{enumerate}

The rest of the paper is organized as follows. Section~\ref{sec:related} surveys related work and positions our contribution. Section~\ref{sec:prelim} gives preliminaries and the threat model. Section~\ref{sec:model} presents the formal model. Sections~\ref{sec:arch}--\ref{sec:design} give the architecture and contract design; Section~\ref{sec:security} proves the security properties; Sections~\ref{sec:impl}--\ref{sec:eval} cover implementation and evaluation; Section~\ref{sec:agents} presents the autonomous-agent layer and its case study; Sections~\ref{sec:discussion}--\ref{sec:conclusion} discuss limitations and conclude.

\section{Related Work}
\label{sec:related}

\textbf{Blockchain platforms.} Bitcoin introduced the replicated append-only transaction ledger secured by proof of work~\cite{nakamoto2008}. Ethereum generalized the model with a Turing-complete virtual machine and persistent contract state~\cite{buterin2014,wood2014}, enabling applications beyond currency transfer~\cite{christidis2016}. Consortium deployments typically replace proof of work with committee-based consensus such as PBFT~\cite{castro1999}, which suits closed ecosystems like taxation, where participants are identified and permissioned. Our contract layer is consensus-agnostic and runs unmodified on either configuration.

\textbf{Blockchain for tax and invoicing.} Hyv\"arinen et al.\ prototyped a blockchain service that eliminates dividend-tax refund fraud in Denmark by making refund state explicit and shared~\cite{hyvarinen2017}---the same ``make fraud unrepresentable'' principle we apply to reimbursement. Fatz et al.\ proposed decentralized validation of VAT processes to achieve tax compliance by design~\cite{fatz2019}. Zhang and Liu analyzed the data-sharing characteristics of e-invoices and sketched a blockchain e-invoice design, without a lifecycle implementation~\cite{zhang2017}. On the industrial side, the Shenzhen Municipal Taxation Bureau and Tencent launched China's first production blockchain e-invoice in August 2018~\cite{tencent2018}, issuing about six million invoices in the first year~\cite{cointelegraph2019}; the system is proprietary, and its contract-level design is not public.

\textbf{Smart-contract e-invoice and VAT systems.} Closest to our work, Nguyen et al.\ digitize invoices with an Ethereum smart contract combined with a decentralized storage network, authenticating a transaction and then computing and approving its VAT payment~\cite{nguyen2019}. More recently, Farchan implemented e-invoice (\emph{e-Faktur}) issuance and validation for Indonesia's VAT system on a permissioned Hyperledger Fabric ledger, evaluating resilience against simulated attacks~\cite{farchan2024}. Both target the \emph{issuing and reporting} side---authenticating an invoice and settling its VAT---and neither addresses the consumption-side lifecycle: neither models invoices as ownable tokens, implements red-flush credit-note correction, nor prevents duplicate \emph{reimbursement}. A parallel, largely industry-driven line of work tokenizes invoices as tradable ERC-721 assets~\cite{erc721} for \emph{invoice financing}, where the analogous hazard is double-financing the same receivable; there tradability is the whole point, the opposite of our non-tradable design for legal tax documents.

\textbf{Positioning.} Table~\ref{tab:compare} summarizes the comparison. Relative to prior academic work, which is largely conceptual, targets a single fraud scenario, or covers only invoice issuance and VAT settlement, this paper contributes a lifecycle-complete, \emph{formally-analyzed}, open-source design---from blank-invoice distribution through red-flush correction to lock-based reimbursement---with proved security properties and reproducible cost measurements. Relative to proprietary industrial systems, it provides a published, auditable contract design; we follow established Solidity security patterns~\cite{wohrer2018,delmolino2016}.

\begin{table*}[t]
\caption{Comparison with representative blockchain e-invoice / tax systems.
\cmark~= supported, \pmark~= partial or proprietary, \xmark~= not addressed.}
\label{tab:compare}
\centering
\renewcommand{\arraystretch}{1.2}
\resizebox{\textwidth}{!}{%
\begin{tabular}{@{}l l c c c c c c@{}}
\toprule
System & Platform & Invoice model & Lifecycle coverage & Digest verification & Red-flush & \makecell{Duplicate-reimburse\\prevention} & \makecell{Formal proofs /\\open source} \\
\midrule
Zhang \& Liu '17~\cite{zhang2017}      & Conceptual        & data record         & issuance (sketch)         & \pmark & \xmark & \xmark & \xmark~/~\xmark \\
Hyv\"arinen et al.\ '17~\cite{hyvarinen2017} & Ethereum      & refund claim        & dividend-tax refund       & \pmark & \xmark & \cmark~(refund) & \xmark~/~\xmark \\
Shenzhen/Tencent '18~\cite{tencent2018} & Consortium       & data record         & issue $\to$ reimburse      & \cmark & \pmark & \pmark~(closed) & \xmark~/~\xmark \\
Fatz et al.\ '19~\cite{fatz2019}       & Ethereum          & process instance    & VAT process validation    & \pmark & \xmark & \xmark & \xmark~/~\xmark \\
Nguyen et al.\ '19~\cite{nguyen2019}   & Ethereum + DSN    & data record         & issue + VAT payment       & \cmark & \xmark & \xmark & \xmark~/~\xmark \\
Farchan '24~\cite{farchan2024}         & Hyperledger Fabric& data record         & issue + reporting         & \cmark & \xmark & \xmark & \xmark~/~\xmark \\
\textbf{This work}                     & Ethereum / EVM    & non-tradable NFT    & full five-subsystem       & \cmark & \cmark & \cmark~(cross-org) & \cmark~/~\cmark \\
\bottomrule
\end{tabular}%
}
\end{table*}

\section{Preliminaries and Threat Model}
\label{sec:prelim}

\subsection{Blockchain and Smart Contracts}
Ethereum maintains a replicated state machine: externally owned accounts hold balances and issue signed transactions, and contract accounts hold code and storage executed by the Ethereum Virtual Machine (EVM)~\cite{wood2014}. A \emph{smart contract} is a program whose functions are invoked by transactions; every state-changing invocation is ordered by consensus, recorded immutably, and metered in \emph{gas}, which measures computational and storage cost independently of any token price. Contracts emit \emph{events}---indexed log entries that clients query to reconstruct a complete, tamper-evident history of an asset.

Two properties are load-bearing for invoicing. First, storage writes are permanent and globally replicated, so an on-chain invoice cannot be silently altered or deleted; corrections must themselves be recorded transactions. Second, contract code enforces its own invariants: if the contract exposes no function that reimburses an invoice twice, no participant---including the operator---can do so.

\subsection{Cryptographic Primitives}
We use the keccak256 hash function $H:\{0,1\}^*\!\to\!\{0,1\}^{256}$, modeled as second-preimage- and collision-resistant, and ECDSA signatures over secp256k1, assumed existentially unforgeable under chosen-message attack (EUF-CMA); the EVM authenticates every transaction's sender by signature. Consensus is assumed \emph{safe} (no two conflicting histories finalize) under its fault threshold---honest hash-power majority for proof of work, or fewer than $n/3$ Byzantine nodes for PBFT-class consortium consensus~\cite{castro1999}. Consequently, transactions apply to a single, totally ordered, globally agreed state.

\subsection{Threat Model}
\label{sec:threat}
The adversary $\mathcal{A}$ controls the private keys of an arbitrary set of \emph{registered} enterprise accounts and may submit any transactions in any order, adaptively. This explicitly includes \emph{faulty or adversarial autonomous agents}: software (including LLM-based agents, Section~\ref{sec:agents}) operating a registered account is, from the contract's perspective, indistinguishable from any other key holder, so every guarantee proved against $\mathcal{A}$ applies verbatim to arbitrary agent behavior, including behavior induced by prompt injection. $\mathcal{A}$ cannot forge signatures of honest accounts (EUF-CMA), cannot find collisions or second preimages of $H$, and cannot violate consensus safety. The tax bureau key is honest for \emph{admission} (registration and blank-invoice supply), reflecting its legal role, but---unlike centralized systems---is \emph{not} trusted for invoice state, which the contract and consensus enforce. $\mathcal{A}$ succeeds if it can (G1) cause some invoice to be reimbursed twice; (G2) present an invoice face that verifies yet differs from the recorded one; or (G3) drive a lifecycle transition on an invoice it is not authorized to act on. Section~\ref{sec:security} proves each goal infeasible. Off-chain concerns (fictitious underlying sales, key theft, storage-layer confidentiality) are out of scope for these guarantees and discussed in Section~\ref{sec:discussion}.

\section{Formal Model}
\label{sec:model}

We model the contract as a deterministic transition system over a global state. Table~\ref{tab:notation} lists the notation.

\begin{table}[t]
\caption{Notation for the formal model.}
\label{tab:notation}
\centering
\renewcommand{\arraystretch}{1.15}
\begin{tabular}{@{}c l@{}}
\toprule
Symbol & Meaning \\
\midrule
$\mathcal{A}$ & address space; $b\in\mathcal{A}$ the tax bureau \\
$\mathcal{N}$ & invoice-number space (nonzero) \\
$R\subseteq\mathcal{A}$ & registered enterprise accounts \\
$O:\mathcal{N}\rightharpoonup\mathcal{A}$ & invoice owner (undefined $=$ nonexistent) \\
$\mathsf{st}:\mathcal{N}\to Q$ & lifecycle status, $Q$ from Def.~\ref{def:lts} \\
$\mathsf{cr}:\mathcal{N}\to\{0,1\}$ & credit-note flag \\
$\mathsf{dc}:\mathcal{N}\to\{0,1\}$ & tax-declared flag \\
$F:\mathcal{N}\rightharpoonup \mathbb{F}$ & invoice face record (Def.~\ref{def:invoice}) \\
$D:\{0,1\}^{256}\rightharpoonup\mathcal{N}$ & digest index \\
$L:\mathcal{N}\rightharpoonup \mathcal{A}\times \mathsf{Cid}$ & reimbursement lock (locker, claim id) \\
$\mathsf{sel}(n),\mathsf{buy}(n)$ & seller / buyer address of invoice $n$ \\
$\mathsf{tot}(n)$ & tax-inclusive total of invoice $n$ (cents) \\
\bottomrule
\end{tabular}
\end{table}

\begin{definition}[Global state]
\label{def:state}
A state is a tuple $\sigma=(R,O,\mathsf{st},\mathsf{cr},\mathsf{dc},F,D,L)$. The genesis state $\sigma_0$ has $R=\varnothing$ and all partial maps empty; $\mathsf{st}(n)=\textsc{None}$ wherever $O(n)$ is undefined.
\end{definition}

\begin{definition}[Invoice face]
\label{def:invoice}
A face $F(n)\in\mathbb{F}$ is the record $\langle \mathsf{sel},\mathsf{buy}$ (each a snapshot of taxpayer id, name, bank, address)$,\ p,\ r,\ t,\ \mathsf{tot},\ \kappa,\ \mathit{items},\ \mathsf{cr},\ \tau\rangle$, where $p$ is the pre-tax amount in cents, $r$ the rate in basis points, $t=\lfloor p\cdot r/10^4\rfloor$ the tax, $\mathsf{tot}=p+t$, $\kappa$ the category code, and $\tau$ the issuance time. Its digest is $H(F(n))$ over the canonical encoding of $\langle n,\mathsf{sel}.\mathit{id},\mathsf{buy}.\mathit{id},p,r,\kappa,\mathit{items},\mathsf{cr},\tau\rangle$.
\end{definition}

\begin{definition}[Lifecycle transition system]
\label{def:lts}
The invoice lifecycle is the labeled transition system $\mathcal{L}=(Q,\Sigma,\rightarrow,\textsc{None})$ with states
$Q=\{\textsc{None},\textsc{Blank},\textsc{Issued},\textsc{Locked},\textsc{Reimbursed},\textsc{Reversed}\}$,
labels $\Sigma$ the contract operations of Table~\ref{tab:transitions}, and $\rightarrow$ the per-invoice projection of those rules (Fig.~\ref{fig:fsm}). \textsc{Reimbursed} and \textsc{Reversed} are terminal (no outgoing edges).
\end{definition}

Each operation is a \emph{guarded command} $\mathsf{op}(c,\vec{x}): \mathit{guard}(\sigma,c,\vec{x})\Rightarrow \sigma'$, where $c$ is the transaction sender; if the guard fails the transaction reverts and $\sigma$ is unchanged. Table~\ref{tab:transitions} gives the guards and effects; these are exactly the \texttt{require} clauses and state writes of the contract. An \emph{execution} is a finite sequence $\sigma_0 \xrightarrow{\mathsf{op}_1}\sigma_1\cdots$; by consensus safety it is a single total order.

\begin{table*}[t]
\caption{Guarded transition rules (per invoice $n$). $c$ is the caller. All guards are conjunctive; a failed guard reverts with no effect.}
\label{tab:transitions}
\centering
\renewcommand{\arraystretch}{1.3}
\resizebox{\textwidth}{!}{%
\begin{tabular}{@{}l l l@{}}
\toprule
Operation & Guard (precondition) & Effect (state update) \\
\midrule
$\mathsf{register}(c,a,\mathit{id})$ & $c=b \,\wedge\, a\notin R$ & $R \mathrel{+}= a$ \\
$\mathsf{mint}(c,n,a)$ & $c=b \,\wedge\, a\in R \,\wedge\, O(n){=}\bot$ & $O(n){:=}a;\ \mathsf{st}(n){:=}\textsc{Blank}$ \\
$\mathsf{issue}(c,n,v,\phi)$ & $c\in R \wedge O(n){=}c \wedge \mathsf{st}(n){=}\textsc{Blank} \wedge v\in R \wedge v\neq c \wedge \mathit{valid}(\phi)$ & $F(n){:=}\phi;\ \mathsf{st}(n){:=}\textsc{Issued};\ O(n){:=}v;\ D(H(\phi)){:=}n$ \\
$\mathsf{redFlush}(c,n,m)$ & $c\in R \wedge \mathsf{st}(n){=}\textsc{Issued} \wedge \mathsf{cr}(n){=}0 \wedge \mathsf{sel}(n){=}c \wedge O(m){=}c \wedge \mathsf{st}(m){=}\textsc{Blank}$ & \makecell[l]{$F(m){:=}F(n),\ \mathsf{cr}(m){:=}1;\ \mathsf{st}(n){:=}\textsc{Reversed};$\\$\mathsf{st}(m){:=}\textsc{Issued};\ O(m){:=}\mathsf{buy}(n);\ D(H(F(m))){:=}m$} \\
$\mathsf{declare}(c,n)$ & $c=\mathsf{sel}(n) \wedge \mathsf{st}(n)\notin\{\textsc{None},\textsc{Blank}\} \wedge \mathsf{cr}(n){=}0 \wedge \mathsf{dc}(n){=}0$ & $\mathsf{dc}(n){:=}1$ \\
$\mathsf{lock}(c,n,\mathit{cid})$ & $\mathsf{st}(n){=}\textsc{Issued} \wedge \mathsf{cr}(n){=}0 \wedge O(n){=}c \ [\wedge\ \mathsf{dc}(n){=}1]$ & $\mathsf{st}(n){:=}\textsc{Locked};\ L(n){:=}(c,\mathit{cid})$ \\
$\mathsf{reimburse}(c,n)$ & $\mathsf{st}(n){=}\textsc{Locked} \wedge L(n).\mathit{locker}{=}c$ & $\mathsf{st}(n){:=}\textsc{Reimbursed}$ \\
$\mathsf{unlock}(c,n)$ & $\mathsf{st}(n){=}\textsc{Locked} \wedge L(n).\mathit{locker}{=}c$ & $\mathsf{st}(n){:=}\textsc{Issued};\ L(n){:=}\bot$ \\
\bottomrule
\end{tabular}%
}
\end{table*}

\begin{figure}[t]
\centering
\resizebox{\columnwidth}{!}{%
\begin{tikzpicture}[
  font=\footnotesize,
  every node/.style={font=\footnotesize},
  st/.style={draw=ctrline, line width=0.8pt, fill=ctrfill, rounded corners=3pt, align=center, minimum width=1.35cm, minimum height=0.72cm},
  term/.style={draw=acc, line width=1pt, fill=acc!12, align=center, minimum width=1.5cm, minimum height=0.78cm, double, double distance=1pt, rounded corners=3pt},
  cr/.style={draw=gold, line width=0.8pt, fill=gold!12, rounded corners=3pt, align=center, minimum width=1.4cm, minimum height=0.72cm},
  lbl/.style={font=\scriptsize\itshape, inner sep=1.5pt},
  e/.style={-{Stealth[length=2mm]}, line width=0.7pt}
]
\node[st] (blank) {Blank};
\node[st, right=1.7cm of blank] (issued) {Issued};
\node[st, right=2.0cm of issued] (locked) {Locked};
\node[term, right=1.7cm of locked] (reimb) {Reimbursed};
\node[term, below=1.05cm of issued] (reversed) {Reversed};
\node[cr, below=1.05cm of locked] (credit) {Issued\\\scriptsize(credit)};
\node[left=0.7cm of blank] (start) {};

\draw[e] (start) -- node[lbl,above]{mint} (blank);
\draw[e] (blank) -- node[lbl,above]{issue} (issued);
\draw[e, transform canvas={yshift=2.2pt}] (issued) -- node[lbl,above]{lock} (locked);
\draw[e, transform canvas={yshift=-2.2pt}] (locked) -- node[lbl,below]{unlock} (issued);
\draw[e] (locked) -- node[lbl,above]{reimburse} (reimb);
\draw[e] (issued) -- node[lbl,left,align=right]{redFlush\\(original)} (reversed);
% credit-note edge is routed through a lane below the bottom row so it neither
% crosses the Reversed node nor collides with its label
\coordinate (lane) at ([yshift=-0.55cm]reversed.south);
\draw[e, rounded corners=3pt] (blank.south) -- (blank |- lane)
  -- node[lbl,below]{redFlush (credit note)} (credit |- lane) -- (credit.south);
\draw[{Stealth[length=1.6mm]}-{Stealth[length=1.6mm]}, dashed, gold, line width=0.7pt] (reversed) -- node[lbl,below]{linked} (credit);
\end{tikzpicture}%
}
\caption{Invoice lifecycle automaton $\mathcal{L}$ (Def.~\ref{def:lts}). Double-bordered red states are terminal; the gold node is a frozen credit note. Tax declaration is an orthogonal one-way flag $\mathsf{dc}$ and is omitted for clarity.}
\label{fig:fsm}
\end{figure}

\section{Requirements and System Architecture}
\label{sec:arch}

\subsection{Actors and Functional Requirements}
Four actors participate in the invoice lifecycle:
\begin{itemize}
\item \textbf{Tax bureau}: the sole authority that admits enterprises and creates blank invoices; monitors all invoice state.
\item \textbf{Seller} (issuer): applies for blank invoices, issues invoices for real transactions, corrects erroneous invoices, declares output tax.
\item \textbf{Buyer} (recipient): receives invoices, verifies authenticity, reimburses expenses against invoices.
\item \textbf{Third-party verifier}: auditors, courts, banks, and other parties who must verify invoices without trusting seller or buyer.
\end{itemize}

From the pain points in Section~\ref{sec:prelim} and current invoice regulations we derive five functional subsystems, which together cover the complete lifecycle: (1)~\emph{application and distribution}---only the bureau may create invoices; enterprises apply and receive numbered ranges of blanks, and the bureau may grant fewer than requested; (2)~\emph{issuance and circulation}---filling a blank's face and delivering it to the buyer is one atomic action, and the face is immutable once issued; (3)~\emph{void and red-flush}---erroneous invoices cannot be edited or deleted, so the original seller issues a linked negative credit invoice and the original is marked reversed; (4)~\emph{query and verification}---any authorized party queries invoice state by number and verifies a face's authenticity from its content digest; (5)~\emph{declaration and reimbursement}---the seller declares output tax exactly once, and the buyer reimburses an invoice at most once, with an explicit failure-recovery path. Non-functional requirements include permissioned participation, auditability, throughput adequate for enterprise-scale invoicing, and no trusted intermediary on the consumption side.

\subsection{Is a Blockchain Warranted?}
Applying the standard applicability test---shared state, multiple writers, mutual distrust, and no perfect trusted third party---electronic invoicing qualifies on all four counts. Invoice state must be shared among bureau, seller, buyer, and verifiers; all of them write state at different lifecycle stages; sellers and buyers have adversarial incentives; and the existing trusted third party, the central platform, is exactly the bottleneck and silo we seek to remove. Because participants are identified enterprises under a regulator, a \emph{consortium} deployment is the natural fit, with the public-chain design retained as the more adversarial baseline.

\subsection{Architecture}
Fig.~\ref{fig:arch} shows the architecture. All lifecycle rules reside in a single smart contract (the trust boundary); role dashboards are thin clients that submit signed transactions and reconstruct history from events. No application server holds authoritative state.

\begin{figure}[t]
\centering
\resizebox{\columnwidth}{!}{%
\begin{tikzpicture}[
  font=\footnotesize,
  actor/.style={draw=appline, fill=appfill, rounded corners=2pt, align=center, minimum width=1.55cm, minimum height=0.8cm, line width=0.7pt},
  sub/.style={draw=ctrline, fill=ctrfill, rounded corners=2pt, align=center, minimum width=1.5cm, minimum height=0.85cm, line width=0.7pt},
  core/.style={draw=ctrline, fill=ctrfill!55, rounded corners=2pt, align=center, minimum width=8.2cm, minimum height=0.6cm, line width=0.7pt},
  chain/.style={draw=ledline, fill=ledfill, rounded corners=2pt, align=center, minimum width=8.4cm, minimum height=0.66cm, line width=0.7pt},
  lyr/.style={rounded corners=3pt, inner sep=6pt, line width=0.7pt},
  caplbl/.style={font=\scriptsize\bfseries, anchor=north west}
]
% application layer
\node[actor] (bureau) {\ding{110}~Tax\\bureau};
\node[actor, right=2mm of bureau] (seller) {\ding{110}~Seller};
\node[actor, right=2mm of seller] (buyer) {\ding{110}~Buyer};
\node[actor, right=2mm of buyer] (verifier) {\ding{110}~Verifier};
\begin{scope}[on background layer]
\node[lyr, draw=appline, fill=appfill!35, fit=(bureau)(verifier), inner ysep=15pt] (appL) {};
\end{scope}
\node[caplbl, text=appline] at (appL.north west) {\,Application layer — React\,/\,ethers.js};

% contract layer
\node[sub, below=13mm of bureau] (s1) {1~Apply\\\&\,distrib.};
\node[sub, right=1.2mm of s1] (s2) {2~Issue\\\&\,circ.};
\node[sub, right=1.2mm of s2] (s3) {3~Void\\\&\,red-flush};
\node[sub, right=1.2mm of s3] (s4) {4~Query\\\&\,verify};
\node[sub, right=1.2mm of s4] (s5) {5~Declare\\\&\,reimb.};
\node[core, below=2mm of s3] (registry) {registry~\textbullet~ownership~\textbullet~lifecycle FSM~\textbullet~digest index~\textbullet~events};
\begin{scope}[on background layer]
\node[lyr, draw=ctrline, fill=ctrfill!35, fit=(s1)(s5)(registry), inner ysep=15pt] (ctrL) {};
\end{scope}
\node[caplbl, text=ctrline] at (ctrL.north west) {\,Contract layer — \texttt{EInvoice.sol} on the EVM};

% ledger layer
\node[chain, below=13mm of registry] (chain) {Ethereum: consortium (PBFT-style, permissioned) \textbar\ private \textbar\ public};
\begin{scope}[on background layer]
\node[lyr, draw=ledline, fill=ledfill!55, fit=(chain), inner ysep=14pt] (ledL) {};
\end{scope}
\node[caplbl, text=ledline] at (ledL.north west) {\,Ledger layer};

\draw[-{Stealth[length=2mm]}, line width=0.8pt] (appL.south) -- node[right, font=\scriptsize, xshift=3mm]{signed txns / event queries} (ctrL.north);
\draw[-{Stealth[length=2mm]}, line width=0.8pt] (ctrL.south) -- node[right, font=\scriptsize]{consensus, replication} (ledL.north);
\end{tikzpicture}%
}
\caption{System architecture. The smart contract is the sole trust boundary; dashboards are stateless clients.}
\label{fig:arch}
\end{figure}

\section{Smart Contract Design}
\label{sec:design}

\subsection{Invoices as Non-Tradable, Non-Fungible Tokens}
Each invoice is unique, indivisible, and identified by its number, which naturally suggests the non-fungible token model of ERC-721~\cite{erc721}: the contract maintains \texttt{invoiceOwner} (number $\to$ holder), per-holder invoice sets, and mint/transfer primitives. We deliberately deviate from ERC-721 in one crucial respect: there is \emph{no public transfer function}. Invoices are legal documents, not tradable assets; ownership changes only as a side effect of lifecycle operations (distribution moves a blank from bureau to seller; issuance moves the invoice from seller to buyer). This closes an entire class of misuse (invoice resale, gray-market circulation) at the type-system level.

\subsection{Data Model}
The contract stores, per invoice, the fields of Def.~\ref{def:invoice}: the two parties (address plus a \emph{snapshot} of taxpayer id, legal name, bank information, and registered address, copied from the enterprise registry at issuance time so the face stays immutable even if the registry later changes), the amounts, the category code and item description, the lifecycle status, the credit flag, a bidirectional linkage field for red-flush, the issuance timestamp, and the face digest. An enterprise registry, writable only by the bureau, binds each participating address to its taxpayer identity; every lifecycle operation checks registration, giving permissioned semantics even on a public chain. Monetary values are unsigned integers (cents); credit invoices carry positive magnitudes plus the \texttt{isCredit} flag rather than signed values, avoiding sign-handling errors. Tax is computed on chain as $\lfloor \mathit{preTax}\cdot\mathit{rateBps}/10^4\rfloor$, so an arithmetically inconsistent face cannot exist.

\subsection{Subsystem 1: Application and Distribution}
An enterprise calls \texttt{applyForInvoices(count)}; the bureau either rejects or calls \texttt{approveApplication(id,\allowbreak startId, granted)} with $\mathit{granted}\le\mathit{count}$, minting the range $[\mathit{startId},\mathit{startId}{+}\mathit{granted})$ of blanks directly to the applicant. Uniqueness is enforced at mint time. Because distribution is a ledger transition rather than a portal download, the bureau needs no high-availability distribution platform, and every outstanding blank is publicly attributable to its holder.

\subsection{Subsystem 2: Issuance and Circulation}
\texttt{issueInvoice} (Listing~\ref{lst:issue}) checks that the caller holds the blank, that both parties are registered and distinct, and that amounts are valid; it then snapshots both parties, computes tax, stores the face, records the face digest, and transfers ownership to the buyer---one atomic transaction. Delivery and recording are inseparable: the buyer's very possession of the invoice implies a validated on-chain record.

\begin{lstlisting}[caption={Issuance (abridged).},label={lst:issue},float=t]
function issueInvoice(uint256 id, address buyer,
    uint256 preTax, uint256 rateBps,
    uint256 category, string calldata items)
    external onlyRegistered {
  require(invoiceOwner[id] == msg.sender);
  require(invoices[id].status == Status.Blank);
  require(enterprises[buyer].registered
          && buyer != msg.sender);
  require(preTax > 0 && rateBps <= 10000);
  _fillFace(invoices[id], msg.sender, buyer,
            preTax, rateBps, category, items, false);
  _transfer(msg.sender, buyer, id);
  emit InvoiceIssued(id, msg.sender, buyer,
       invoices[id].totalAmount,
       invoices[id].contentHash);
}
\end{lstlisting}

\subsection{Subsystem 3: Void and Red-Flush}
Ledger immutability forbids editing an issued invoice, so correction is itself a recorded transaction. \texttt{redFlush(originalId, blankId)} (Fig.~\ref{fig:redflush}) may be called only by the original seller, only on an invoice still \emph{Issued}, and only using a blank the seller holds; it copies the face onto the credit note with \texttt{isCredit} set, links both invoices bidirectionally, marks the original \emph{Reversed}, and delivers the credit note to the original buyer. Both the error and its correction remain permanently visible, eliminating the ``void and reprint'' fraud pattern of centralized systems.

\begin{figure}[t]
\centering
\resizebox{0.82\columnwidth}{!}{%
\begin{tikzpicture}[
  font=\scriptsize, node distance=5.5mm,
  io/.style={draw=ctrline, fill=ctrfill, rounded rectangle, align=center, minimum height=0.55cm, inner xsep=6pt, line width=0.7pt},
  dec/.style={draw=appline, fill=appfill, diamond, aspect=2.4, align=center, inner sep=1pt, line width=0.7pt},
  act/.style={draw=ctrline, fill=white, rounded corners=2pt, align=center, minimum height=0.55cm, inner xsep=5pt, line width=0.7pt},
  bad/.style={draw=acc, fill=acc!12, rounded corners=2pt, align=center, minimum height=0.5cm, inner xsep=5pt, line width=0.7pt},
  e/.style={-{Stealth[length=1.8mm]}, line width=0.7pt}
]
\node[io] (s) {redFlush($n,m$)};
\node[dec, below=of s] (d1) {caller $=$\\seller($n$)?};
\node[dec, below=7mm of d1] (d2) {$n$ Issued,\\not credit?};
\node[dec, below=7mm of d2] (d3) {caller holds\\blank $m$?};
\node[act, below=7mm of d3] (a1) {copy face $n\!\to\!m$, set \texttt{isCredit}};
\node[act, below=of a1] (a2) {link $n\!\leftrightarrow\!m$; $\mathsf{st}(n)\!=\!$Reversed};
\node[act, below=of a2] (a3) {transfer $m\!\to\!$buyer($n$); emit};
\node[io, below=of a3] (end) {done};
\node[bad, right=9mm of d2] (rev) {revert};

\draw[e] (s) -- (d1);
\draw[e] (d1) -- node[left]{yes} (d2);
\draw[e] (d2) -- node[left]{yes} (d3);
\draw[e] (d3) -- node[left]{yes} (a1);
\draw[e] (a1) -- (a2);
\draw[e] (a2) -- (a3);
\draw[e] (a3) -- (end);
\draw[e] (d1) -| node[above,pos=0.25]{no} (rev);
\draw[e] (d2) -- node[above]{no} (rev);
\draw[e] (d3) -| node[above,pos=0.25]{no} (rev);
\end{tikzpicture}%
}
\caption{Red-flush control flow. Any failed guard reverts, leaving both invoices unchanged.}
\label{fig:redflush}
\end{figure}

\subsection{Subsystem 4: Query and Verification}
At issuance the contract computes $h=H(\cdot)$ over the canonical face encoding (Def.~\ref{def:invoice}) and stores $h\mapsto\mathit{id}$ in a digest index. A verifier holding a purported face recomputes $h'$ locally and calls the read-only \texttt{verifyByHash}$(h')$: a match proves the face is exactly what the issuing transaction recorded; any alteration of any field yields a digest absent from the index (Theorem~\ref{thm:integrity}). Verification requires no interaction with---or trust in---seller, buyer, or any central platform. Standard queries and the event log complete the audit surface: the full history of an invoice is reconstructible from indexed events alone, which our verifier dashboard demonstrates.

\subsection{Subsystem 5: Declaration and Lock-Based Reimbursement}
\label{sec:reimb}
The seller's \texttt{declareTax} marks an invoice's output tax declared, exactly once, and only by the seller; a bureau policy switch can require declaration before reimbursement, encoding at contract level a rule that today exists only on paper. Reimbursement is the pain point that motivates the system, implemented as a two-phase protocol (Listing~\ref{lst:reimb}, Fig.~\ref{fig:seq}) following the principle of \emph{making the invalid unrepresentable}: \texttt{lockForReimbursement} moves an \emph{Issued}, non-credit invoice held by the caller to \emph{Locked}, recording locker and claim id; \texttt{reimburse} moves a \emph{Locked} invoice to the terminal \emph{Reimbursed}, callable only by the locker; \texttt{unlockReimbursement} lets only the locker release a failed claim back to \emph{Issued}. The lock phase exists because reimbursement is a workflow, not an instant: between claim submission and approval the invoice must be unavailable to any competing claim yet recoverable if the claim fails. Binding the lock to a claim-document id also gives auditors a direct on-chain join between invoices and expense claims.

\begin{lstlisting}[caption={Lock-based reimbursement (abridged).},label={lst:reimb},float=t]
function lockForReimbursement(uint256 id,
    string calldata claimDocId) external {
  Invoice storage inv = invoices[id];
  require(inv.status == Status.Issued);
  require(!inv.isCredit);
  require(invoiceOwner[id] == msg.sender);
  inv.status = Status.Locked;
  locks[id] = ReimbursementLock(msg.sender,
      claimDocId, uint64(block.timestamp));
  emit InvoiceLocked(id, msg.sender, claimDocId);
}

function reimburse(uint256 id) external {
  require(invoices[id].status == Status.Locked);
  require(locks[id].locker == msg.sender);
  invoices[id].status = Status.Reimbursed;
  emit InvoiceReimbursed(id, msg.sender,
      invoices[id].totalAmount);
}
\end{lstlisting}

\begin{figure}[t]
\centering
\resizebox{\columnwidth}{!}{%
\begin{tikzpicture}[
  font=\scriptsize,
  msg/.style={-{Stealth[length=1.8mm]}, line width=0.7pt},
  rev/.style={-{Stealth[length=1.8mm]}, line width=0.7pt, acc, dashed},
  ret/.style={-{Stealth[length=1.8mm]}, line width=0.7pt, dashed},
  head/.style={draw, rounded corners=2pt, minimum width=1.7cm, minimum height=0.55cm, align=center}
]
% lifelines
\def\bx{0} \def\cx{3.9} \def\ax{7.4}
\node[head, fill=appfill, draw=appline] (B) at (\bx,0) {Buyer};
\node[head, fill=ctrfill, draw=ctrline] (C) at (\cx,0) {EInvoice contract};
\node[head, fill=appfill, draw=appline] (A) at (\ax,0) {Auditor};
\foreach \x in {\bx,\cx,\ax} \draw[densely dashed, ledline] (\x,-0.32) -- (\x,-5.2);

% activation bar on contract
\fill[ctrfill] (\cx-0.12,-0.75) rectangle (\cx+0.12,-4.75);
\draw[ctrline, line width=0.5pt] (\cx-0.12,-0.75) rectangle (\cx+0.12,-4.75);

\draw[msg] (\bx,-0.9) -- node[above, font=\scriptsize]{lock($n$, claim)} (\cx-0.12,-0.9);
\node[right, font=\scriptsize\itshape, text=ctrline] at (\cx+0.16,-1.2) {Issued $\to$ Locked};
\draw[msg] (\bx,-1.75) -- node[above, font=\scriptsize]{lock($n$, claim$'$)  \textit{(duplicate)}} (\cx-0.12,-1.75);
\draw[rev] (\cx-0.12,-2.15) -- node[above, font=\scriptsize, text=acc]{revert: ``not available''} (\bx,-2.15);
\draw[msg] (\bx,-2.95) -- node[above, font=\scriptsize]{reimburse($n$)} (\cx-0.12,-2.95);
\node[right, font=\scriptsize\itshape, text=ctrline] at (\cx+0.16,-3.25) {Locked $\to$ Reimbursed};
\draw[ret] (\cx-0.12,-3.6) -- node[above, font=\scriptsize]{emit InvoiceReimbursed} (\bx,-3.6);
\draw[msg] (\ax,-4.15) -- node[above, font=\scriptsize]{verifyByHash / query} (\cx+0.12,-4.15);
\draw[ret] (\cx+0.12,-4.55) -- node[above, font=\scriptsize]{authentic + audit trail} (\ax,-4.55);
\end{tikzpicture}%
}
\caption{Reimbursement sequence. The duplicate lock is rejected by the contract; any third party can independently verify and audit.}
\label{fig:seq}
\end{figure}

\section{Security Analysis}
\label{sec:security}
We prove that the goals of the threat model (Section~\ref{sec:threat}) are infeasible. Throughout, ``for invoice $n$'' quantifies over any single number, and executions are the totally ordered sequences of Def.~\ref{def:lts}. We first record an invariant.

\begin{lemma}[Status monotonicity]
\label{lem:mono}
In any execution, once $\mathsf{st}(n)=\textsc{Reimbursed}$ or $\mathsf{st}(n)=\textsc{Reversed}$, no subsequent transition changes $\mathsf{st}(n)$.
\end{lemma}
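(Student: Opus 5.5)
The proof is an induction over the execution combined with an exhaustive case analysis of the guarded rules in Table~\ref{tab:transitions}. Fix an invoice $n$ and an execution $\sigma_0\xrightarrow{\mathsf{op}_1}\sigma_1\cdots$. Suppose $\mathsf{st}_i(n)\in\{\textsc{Reimbursed},\textsc{Reversed}\}$. We must show $\mathsf{st}_{i+1}(n)=\mathsf{st}_i(n)$; a trivial induction on the number of later steps then gives the lemma. By consensus safety the execution is a single total order, and a transaction whose guard fails reverts with no effect. So it suffices to examine a single successful transition $\sigma_i\xrightarrow{\mathsf{op}}\sigma_{i+1}$ and check every rule whose effect column writes $\mathsf{st}(\cdot)$ at index $n$.

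The rules split into three groups.

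\emph{No write to $\mathsf{st}$.} $\mathsf{register}$ and $\mathsf{declare}$ write only $R$ and $\mathsf{dc}$, so they are immediate.

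\emph{Writes to $\mathsf{st}(x)$ for a single argument $x$.} For each such rule, the guard pins the prior status of $x$ to a non-terminal value:
\begin{itemize}
\item $\mathsf{mint}(c,x,a)$ requires $O(x)=\bot$, hence $\mathsf{st}(x)=\textsc{None}$ by Def.~\ref{def:state};
\item $\mathsf{issue}$ requires $\textsc{Blank}$;
\item $\mathsf{lock}$ requires $\textsc{Issued}$;
\item $\mathsf{reimburse}$ and $\mathsf{unlock}$ require $\textsc{Locked}$.
\end{itemize}
If $x=n$, the guard contradicts $\mathsf{st}_i(n)$ being terminal, so the transaction reverts. If $x\neq n$, then $\mathsf{st}(n)$ is untouched.

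\emph{$\mathsf{redFlush}(c,x,m)$.} This rule writes two indices. Its guard requires $\mathsf{st}(x)=\textsc{Issued}$ and $\mathsf{st}(m)=\textsc{Blank}$, so neither index can equal $n$ while $n$ is terminal. Hence a successful $\mathsf{redFlush}$ leaves $\mathsf{st}(n)$ unchanged.

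The $\mathsf{mint}$ case is the one real obstacle, because its guard is stated on $O$ rather than on $\mathsf{st}$. I would handle it with an auxiliary invariant proved by the same induction: $O(n)=\bot$ if and only if $\mathsf{st}(n)=\textsc{None}$. The invariant holds at $\sigma_0$. $\mathsf{mint}$ sets $O$ and $\mathsf{st}$ together, and no rule ever resets $O(n)$ to $\bot$; in particular there is no public transfer or burn, and $\mathsf{issue}$ and $\mathsf{redFlush}$ only reassign $O$ to a defined address. Consequently a terminal $n$ has $O(n)\neq\bot$ and cannot be re-minted.

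A secondary point is that the model must faithfully reflect the code. I would note that the guards of Table~\ref{tab:transitions} are exactly the \texttt{require} clauses. I would also note that no other contract function (e.g.\ \texttt{applyForInvoices}, or \texttt{approveApplication}, which is a batch of $\mathsf{mint}$s) writes \texttt{status} at index $n$ except through these rules. This is the property the SMTChecker discharges mechanically.
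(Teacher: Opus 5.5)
Your proof is correct and follows the paper's argument: you inspect every rule of Table~\ref{tab:transitions} that writes $\mathsf{st}$ and observe that its guard forces the prior status to be non-terminal (for $\mathsf{redFlush}$, the status of both the original and the credit slot). Your auxiliary invariant for $\mathsf{mint}$ makes explicit what the paper takes from the convention in Def.~\ref{def:state}, and only its direction $O(n)=\bot\Rightarrow\mathsf{st}(n)=\textsc{None}$ is actually needed.
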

\begin{proof}
By inspection of Table~\ref{tab:transitions}, the only operations that write $\mathsf{st}(n)$ are $\mathsf{mint},\mathsf{issue},\mathsf{redFlush},\mathsf{lock},\mathsf{reimburse},\mathsf{unlock}$, and their guards require $\mathsf{st}(n)\in\{\textsc{None}\}$, $\{\textsc{Blank}\}$, $\{\textsc{Issued}\}$ (for the original) or $\{\textsc{Blank}\}$ (for the credit slot $m$), $\{\textsc{Issued}\}$, $\{\textsc{Locked}\}$, and $\{\textsc{Locked}\}$ respectively. None is satisfiable when $\mathsf{st}(n)\in\{\textsc{Reimbursed},\textsc{Reversed}\}$. Hence these states are sinks.
\end{proof}

\begin{theorem}[Reimbursement uniqueness]
\label{thm:unique}
In any execution starting from $\sigma_0$, for every invoice $n$ the operation $\mathsf{reimburse}(\cdot,n)$ occurs at most once.
\end{theorem}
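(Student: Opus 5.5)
The argument follows directly from Lemma~\ref{lem:mono} and the guard of $\mathsf{reimburse}$ in Table~\ref{tab:transitions}. First, an ``occurrence'' of $\mathsf{reimburse}(\cdot,n)$ must mean a \emph{successful} (non-reverting) transition. A reverted call leaves $\sigma$ unchanged and so contributes no step of the execution. By consensus safety, the execution is a single totally ordered sequence $\sigma_0\xrightarrow{\mathsf{op}_1}\sigma_1\xrightarrow{\mathsf{op}_2}\cdots$. It therefore makes sense to speak of the \emph{first} index $i$ at which $\mathsf{op}_i=\mathsf{reimburse}(c,n)$ succeeds for some caller $c$.

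The proof then has three steps.
\begin{enumerate}
\item By the effect column of Table~\ref{tab:transitions}, $\sigma_i$ satisfies $\mathsf{st}(n)=\textsc{Reimbursed}$.
\item By Lemma~\ref{lem:mono}, every later state $\sigma_j$ with $j\ge i$ also satisfies $\mathsf{st}(n)=\textsc{Reimbursed}$.
\item Any later successful $\mathsf{op}_j=\mathsf{reimburse}(c',n)$ with $j>i$ requires $\mathsf{st}(n)=\textsc{Locked}$ in $\sigma_{j-1}$. Since $j-1\ge i$, this contradicts step 2.
\end{enumerate}
Hence no second occurrence exists. This holds for any caller $c'$: the argument uses only the status conjunct of the guard, so the locker identity and claim id are irrelevant. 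This is exactly why uniqueness holds across organizations.

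The main point requiring care is the per-invoice reading of the model. Operations on \emph{other} invoices (notably $\mathsf{redFlush}(c,n',m)$, which writes $\mathsf{st}$ at two indices) must never write $\mathsf{st}(n)$ except through the cases already covered in Lemma~\ref{lem:mono}. I would check this row by row in Table~\ref{tab:transitions}. $\mathsf{redFlush}$ touches $n$ only when $n$ is the original (guard $\textsc{Issued}$) or the credit slot (guard $\textsc{Blank}$), and both cases are handled by the lemma. Every other operation writes only the status of its own argument. Re-minting is also excluded: $\mathsf{mint}$ requires $O(n)=\bot$, and $O(n)$ is never reset to $\bot$ once defined, so $n$ cannot be recycled to \textsc{Blank} and reimbursed afresh. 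I would state this small frame property explicitly and then invoke the lemma.

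Finally, I would remark why the argument does not depend on the adversary of Section~\ref{sec:threat}. The proof quantifies over all executions, i.e.\ arbitrary adaptive sequences of transactions from any registered keys. It needs no cryptographic assumption beyond consensus safety, which is what yields the total order. Without that order, two conflicting histories could each contain one reimbursement.
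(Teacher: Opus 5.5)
Your proof is correct and follows essentially the paper's own argument: fix an occurrence at step $i$, use the effect $\mathsf{st}(n){:=}\textsc{Reimbursed}$ and Lemma~\ref{lem:mono} to keep $\mathsf{st}(n)=\textsc{Reimbursed}$ from then on, and contradict the $\textsc{Locked}$ guard of any later occurrence. Your extra frame-property check (no operation on another invoice writes $\mathsf{st}(n)$, and $n$ cannot be re-minted) is what the lemma's inspection of Table~\ref{tab:transitions} already establishes, so it makes explicit what the paper leaves implicit rather than changing the route.
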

\begin{proof}
$\mathsf{reimburse}(\cdot,n)$ has guard $\mathsf{st}(n)=\textsc{Locked}$ and effect $\mathsf{st}(n){:=}\textsc{Reimbursed}$. Suppose it occurs at step $i$. By Lemma~\ref{lem:mono}, $\mathsf{st}(n)=\textsc{Reimbursed}$ at every step $j>i$. For a second occurrence at some $j>i$ its guard would require $\mathsf{st}(n)=\textsc{Locked}\neq\textsc{Reimbursed}$, a contradiction. Hence at most one occurrence.
\end{proof}

\begin{corollary}[Cross-organization non-duplication]
\label{cor:crossorg}
No two reimbursements of the same invoice can succeed regardless of how many distinct organizations or accounts attempt them, and regardless of concurrency.
\end{corollary}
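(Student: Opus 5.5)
The corollary follows from Theorem~\ref{thm:unique} once we show that ``distinct organizations'' and ``concurrency'' add nothing to the model in which that theorem is proved. The plan has three steps.

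\textbf{Step 1: organizations and accounts.} An attempt by any organization or account is a transaction $\mathsf{reimburse}(c,n)$ for some sender $c\in\mathcal{A}$. The threat model of Section~\ref{sec:threat} lets the adversary control an arbitrary set of registered accounts, including ones belonging to different organizations and agents. Theorem~\ref{thm:unique} counts occurrences of $\mathsf{reimburse}(\cdot,n)$ with the caller position left free. It therefore already bounds successful reimbursements summed over all callers, not per caller. Multiple organizations are simply multiple values of $c$, and the bound is independent of $c$.

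\textbf{Step 2: concurrency.} We appeal to consensus safety from the preliminaries. Concurrently submitted transactions are finalized into a single total order. Each transaction is atomic: its guard is evaluated against the state left by its predecessor, and a failed guard reverts with no effect. So every concurrent schedule yields an execution $\sigma_0\xrightarrow{\mathsf{op}_1}\sigma_1\cdots$ in the sense of Def.~\ref{def:lts}. A reimbursement ``succeeds'' exactly when the corresponding transition occurs in that execution. Theorem~\ref{thm:unique} then gives at most one success per invoice $n$. For completeness we can note the direct mechanism. Of two competing $\mathsf{lock}$ calls on $n$, the first in the order sets $\mathsf{st}(n)=\textsc{Locked}$, so the second's guard $\mathsf{st}(n)=\textsc{Issued}$ fails. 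Whichever $\mathsf{reimburse}$ comes first moves $n$ to the sink \textsc{Reimbursed} by Lemma~\ref{lem:mono}, so no later one can pass its guard.

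\textbf{Main obstacle.} The only non-mechanical step is justifying that ``concurrency'' collapses to a single totally ordered execution. This is not provable inside the transition system. It rests on the consensus-safety assumption, together with the fact that the contract state is a single shared global state rather than one copy per organization. The second point is where cross-organization protection differs from per-firm registers. I would state this reliance explicitly and defer to the assumption rather than argue it.
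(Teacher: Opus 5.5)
Your proposal is correct and follows essentially the same route as the paper: consensus safety serializes the concurrent attempts into one execution, and Theorem~\ref{thm:unique}, which leaves the caller free, applies to that execution. The paper likewise grounds caller-independence in the single shared slot $\mathsf{st}(n)$ rather than per-organization bookkeeping, rests explicitly on the consensus-safety assumption as you do, and gives the same racing-lock illustration.
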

\begin{proof}
By consensus safety the concurrent attempts are serialized into one execution (Section~\ref{sec:prelim}); Theorem~\ref{thm:unique} applies to that execution. The guarantee is a property of the single shared slot $\mathsf{st}(n)$, not of any per-organization bookkeeping, so it is independent of the callers' identities or affiliations. In a race, the first $\mathsf{lock}$ sets $\textsc{Locked}$ and every later $\mathsf{lock}$ sees $\mathsf{st}(n)\neq\textsc{Issued}$ and reverts.
\end{proof}

\begin{theorem}[Reimbursement authorization]
\label{thm:authz}
If $\mathsf{reimburse}(c,n)$ succeeds, then $c$ is the account that most recently locked $n$ and held $n$ at that time, and $n$ was not unlocked in between.
\end{theorem}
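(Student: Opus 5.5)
The plan is a backward trace through the execution, using only the guards and effects of Table~\ref{tab:transitions}. Suppose $\mathsf{reimburse}(c,n)$ succeeds at step $i$. Its guard gives $\mathsf{st}(n)=\textsc{Locked}$ and $L(n).\mathit{locker}=c$ in $\sigma_{i-1}$. First we identify which operations can write the lock slot $L(n)$ or move $\mathsf{st}(n)$ into \textsc{Locked}. By inspection, only $\mathsf{lock}$ writes $L(n):=(c',\mathit{cid})$ with $c'$ its caller, only $\mathsf{unlock}$ writes $L(n):=\bot$, and only $\mathsf{lock}$ sets $\mathsf{st}(n):=\textsc{Locked}$. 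In the genesis state $\sigma_0$, $\mathsf{st}(n)=\textsc{None}$ and $L$ is empty.

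Next, let $k<i$ be the largest index whose operation is a $\mathsf{lock}(\cdot,n)$. Such a $k$ exists because $\mathsf{st}(n)=\textsc{Locked}$ at $\sigma_{i-1}$ while $\sigma_0$ has $\mathsf{st}(n)=\textsc{None}$, and only $\mathsf{lock}$ produces \textsc{Locked}. This $k$ is by definition the most recent lock. We then argue that no $\mathsf{unlock}(\cdot,n)$ occurs strictly between $k$ and $i$. To see this, consider the invariant ``$\mathsf{st}(n)=\textsc{Locked}$ and $L(n)=(c_k,\cdot)$'', where $c_k$ is the caller at step $k$, and show it holds at every step $j$ with $k\le j<i$.

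\begin{itemize}
\item Any operation on $n$ other than $\mathsf{lock}$, $\mathsf{unlock}$ and $\mathsf{reimburse}$ has a guard requiring $\mathsf{st}(n)\in\{\textsc{None},\textsc{Blank},\textsc{Issued}\}$, so it cannot fire. The exception is $\mathsf{declare}$, which fires but touches only $\mathsf{dc}$.
\item The $\mathsf{redFlush}$ case needs care: $n$ may appear as the original or as the credit slot $m$. In either role the guard requires \textsc{Issued} or \textsc{Blank}, so it cannot fire either.
\item A $\mathsf{reimburse}$ in between would make $\mathsf{st}(n)=\textsc{Reimbursed}$, a sink by Lemma~\ref{lem:mono}. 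That contradicts the guard at step $i$ (equivalently, Theorem~\ref{thm:unique}).
\item An $\mathsf{unlock}$ in between would set $\mathsf{st}(n):=\textsc{Issued}$. Any later return to \textsc{Locked} before $i$ would then need another $\mathsf{lock}$, contradicting the maximality of $k$.
\end{itemize}

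Hence the invariant persists up to $\sigma_{i-1}$, no unlock intervenes, and $L(n).\mathit{locker}=c_k$. Combined with the reimburse guard, this gives $c=c_k$.

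Finally, the guard of $\mathsf{lock}$ at step $k$ requires $O(n)=c_k$, so $c$ held $n$ at lock time. We may also note that $\mathsf{reimburse}$ itself does not check ownership; the guarantee comes entirely from the lock binding.

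The main obstacle is the exhaustive frame argument: showing that no operation other than $\mathsf{lock}$, $\mathsf{unlock}$ and $\mathsf{reimburse}$ can modify $L(n)$ or $\mathsf{st}(n)$ while $n$ is \textsc{Locked}. The delicate case is $\mathsf{redFlush}$, where $n$ may play either role. I would handle this by a case table over the rows of Table~\ref{tab:transitions}, checking for each row both the write set and the status guard.
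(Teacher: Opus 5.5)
Your proof is correct and follows the same route as the paper's. From the guard of $\mathsf{reimburse}$ you get $\mathsf{st}(n)=\textsc{Locked}$ and $L(n).\mathit{locker}=c$; since $L(n)$ is written only by $\mathsf{lock}$ (to its caller, under $O(n)=\mathit{caller}$) and cleared only by $\mathsf{unlock}$ (which resets $\mathsf{st}(n)$ to \textsc{Issued}), the locker is the most recent locker with no intervening unlock. You make explicit the frame argument over the remaining rows of Table~\ref{tab:transitions} that the paper leaves implicit, and the one ingredient you omit is the paper's closing appeal to EUF-CMA, which lifts the model-level fact about the sender $c$ to the claim that only the holder of $c$'s key can trigger the reimbursement.
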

\begin{proof}
The guard requires $\mathsf{st}(n)=\textsc{Locked}$ and $L(n).\mathit{locker}=c$. $L(n)$ is written only by $\mathsf{lock}$, to $(\mathit{caller},\mathit{cid})$ under the guard $O(n)=\mathit{caller}$, and cleared only by $\mathsf{unlock}$ (which also sets $\textsc{Issued}$). Thus whenever $\mathsf{st}(n)=\textsc{Locked}$, $L(n).\mathit{locker}$ equals the caller of the most recent $\mathsf{lock}$, who held $n$ then, and no intervening $\mathsf{unlock}$ occurred (else $\mathsf{st}(n)=\textsc{Issued}$). By EUF-CMA the sender field $c$ cannot be spoofed.
\end{proof}

\begin{theorem}[Face integrity]
\label{thm:integrity}
Let $\mathsf{verify}(\phi)\triangleq H(\phi)\in\mathrm{dom}(D)$. If a PPT adversary outputs a face $\phi$ with $\mathsf{verify}(\phi)=\mathsf{true}$ and $\phi\neq F(D(H(\phi)))$, then it has computed a second preimage of $H$. Hence, under second-preimage resistance, any face that verifies equals the recorded face except with negligible probability.
\end{theorem}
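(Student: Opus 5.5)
The plan is a direct reduction from a successful adversary to a second-preimage finder. First I will fix what ``the recorded face'' means. If $\mathsf{verify}(\phi)=\mathsf{true}$, then $h\triangleq H(\phi)\in\mathrm{dom}(D)$, so $n\triangleq D(h)$ is defined. By Table~\ref{tab:transitions}, $D$ is written only by $\mathsf{issue}$, which sets $D(H(\phi_0)){:=}n$ together with $F(n){:=}\phi_0$, and by $\mathsf{redFlush}$, which sets $D(H(F(m))){:=}m$ together with $F(m)$. I will show by induction over the execution that the following invariant holds: \emph{for every $h\in\mathrm{dom}(D)$, $F(D(h))$ is defined and $H(F(D(h)))=h$}. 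The base case is $\sigma_0$, where $D$ is empty. For the inductive step, every write to $D$ is paired with the write to $F$ of the same invoice. It remains to check that no later operation overwrites $F(n)$ once $D$ points to $n$. Issue requires $\mathsf{st}(n)=\textsc{Blank}$, and redFlush requires the target slot to be Blank, so by the lifecycle (Lemma~\ref{lem:mono}-style inspection: Blank is never re-entered after Issued) each face is written exactly once.

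Second, I will build the reduction. Let $\phi^\ast\triangleq F(D(H(\phi)))$. By the invariant, $H(\phi^\ast)=H(\phi)$, and by hypothesis $\phi\neq\phi^\ast$. The reduction $\mathcal{B}$ runs the adversary together with a simulation of the public chain state, which is fully observable to anyone. It receives $\phi$, reads $\phi^\ast$ from on-chain storage, and outputs the pair $(\phi^\ast,\phi)$. Strictly speaking this pair is a collision. I will argue that it is a \emph{second} preimage relative to the target $\phi^\ast$, because $\phi^\ast$ is fixed by an honest issuing transaction. When the issuer is itself adversarial, I will fall back on collision resistance, which the preliminaries also assume. $\mathcal{B}$ is PPT, so its success probability is negligible, and hence so is the adversary's.

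The main obstacle is a subtlety in the encoding. The hash is taken over the canonical encoding of only a sub-tuple $\langle n,\mathsf{sel}.\mathit{id},\mathsf{buy}.\mathit{id},p,r,\kappa,\mathit{items},\mathsf{cr},\tau\rangle$. It does not cover the full face, which also contains names, bank data, $t$ and $\mathsf{tot}$. So ``$\phi\neq\phi^\ast$'' has to be read modulo the hashed fields. I will handle this in two steps. First, $t$ and $\mathsf{tot}$ are deterministic functions of $p$ and $r$. Second, the party snapshots are determined by the taxpayer ids through the registry at issuance time $\tau$. I will then identify faces with their canonical encodings, and I will need the encoding to be injective; I will assume this of the ABI encoding, length-prefixing \emph{items} if necessary. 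Under these identifications, distinct faces have distinct preimages, and the reduction above goes through.
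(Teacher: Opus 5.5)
Your proposal is correct and follows essentially the paper's proof. Like the paper, you establish that every $h\in\mathrm{dom}(D)$ satisfies $h=H(F(D(h)))$, since only $\mathsf{issue}$ and $\mathsf{redFlush}$ write $D$ and each does so together with the stored face, and you then read a verifying $\phi\neq F(D(H(\phi)))$ as a colliding pair, i.e.\ a second preimage of the recorded target. Your additions are refinements the paper's proof leaves implicit: the write-once argument for $F$, the fallback to collision resistance when the seller itself chose the target face, and the observation that the digest binds only the hashed sub-tuple of Def.~\ref{def:invoice}. One caveat: identifying faces with their canonical encodings genuinely narrows the claim, and the registry does not make that identification lossless, because a presented face can alter the unhashed name or bank fields of a snapshot without changing $H(\phi)$. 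That narrowed reading is the one under which both your argument and the paper's go through.
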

\begin{proof}
$D$ is written only by $\mathsf{issue}$ and $\mathsf{redFlush}$, each as $D(H(\phi^\star)){:=}n$ for the face $\phi^\star=F(n)$ actually stored. So every $h\in\mathrm{dom}(D)$ satisfies $h=H(F(D(h)))$. If $\mathsf{verify}(\phi)$ holds then $H(\phi)=H(F(n))$ for $n=D(H(\phi))$. If additionally $\phi\neq F(n)$, then $(\phi,F(n))$ is a colliding pair with $H(\phi)=H(F(n))$; given the recorded target $F(n)$ this is a second preimage.
\end{proof}

\begin{theorem}[Red-flush value conservation]
\label{thm:redflush}
After $\mathsf{redFlush}(c,n,m)$ succeeds, the signed values $\nu(n)+\nu(m)=0$, where $\nu(k)=(1-2\,\mathsf{cr}(k))\cdot\mathsf{tot}(k)$; moreover $n$ is permanently \textsc{Reversed} and the credit $m$ can never be reimbursed or further red-flushed.
\end{theorem}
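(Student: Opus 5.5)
The plan is to read everything off the $\mathsf{redFlush}$ row of Table~\ref{tab:transitions}. The three claims (zero signed sum, $n$ permanently \textsc{Reversed}, $m$ frozen) are then finished with Lemma~\ref{lem:mono} plus an inspection of guards. The argument splits into three parts, matching the three conclusions.

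\textbf{Value conservation.} The guard of $\mathsf{redFlush}(c,n,m)$ requires $\mathsf{cr}(n)=0$, and $\mathsf{cr}(n)$ is not among the effects, so after the step $\nu(n)=\mathsf{tot}(n)$. The effect sets $F(m):=F(n)$ and $\mathsf{cr}(m):=1$.

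The one subtlety is that $\mathsf{tot}$ must be a function of the stored face. By Def.~\ref{def:invoice}, $\mathsf{tot}=p+\lfloor p r/10^4\rfloor$ is determined by $(p,r)$, which are copied verbatim. Hence $\mathsf{tot}(m)=\mathsf{tot}(n)$, and $\nu(m)=-\mathsf{tot}(n)$.

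This identity also persists beyond the instant after the step. No rule writes $F(k)$ or $\mathsf{cr}(k)$ except $\mathsf{issue}$ and $\mathsf{redFlush}$ on a slot whose status is \textsc{Blank}. The original $n$ is now \textsc{Reversed} and $m$ is \textsc{Issued}, and by monotonicity neither returns to \textsc{Blank}. So the identity holds at every later state.

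\textbf{$n$ permanently \textsc{Reversed}.} The effect sets $\mathsf{st}(n):=\textsc{Reversed}$, and Lemma~\ref{lem:mono} applies immediately.

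\textbf{$m$ frozen.} This is the step needing the most care, because Lemma~\ref{lem:mono} does not apply: $m$ sits in \textsc{Issued}, which is not a sink. I will prove by induction over subsequent steps the invariant
\[
\mathsf{cr}(m)=1 \ \wedge\ \mathsf{st}(m)=\textsc{Issued}.
\]
The base case is the effect of the $\mathsf{redFlush}$ step itself. For the inductive step I go through the operations that write $\mathsf{st}(m)$ or $\mathsf{cr}(m)$:
\begin{itemize}
\item $\mathsf{mint}$ and $\mathsf{issue}$ need $O(m)=\bot$ or \textsc{Blank}, which the invariant excludes.
\item $\mathsf{redFlush}$ with $m$ as original needs $\mathsf{cr}(m)=0$, and with $m$ as credit slot needs \textsc{Blank}; both are excluded.
\item $\mathsf{lock}(\cdot,m,\cdot)$ needs $\mathsf{cr}(m)=0$, which is excluded.
\item $\mathsf{reimburse}$ and $\mathsf{unlock}$ need \textsc{Locked}, which is unreachable for $m$ without a prior $\mathsf{lock}$.
\end{itemize}
$\mathsf{declare}$ writes only $\mathsf{dc}$ and requires $\mathsf{cr}=0$ anyway. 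No step with $n\neq m$ touches these fields of $m$. So the invariant holds throughout.

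It follows that $\mathsf{reimburse}(\cdot,m)$, whose guard needs \textsc{Locked}, never fires. Likewise $\mathsf{redFlush}(\cdot,m,\cdot)$, whose guard needs $\mathsf{cr}(m)=0$, never fires.

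\textbf{Expected obstacle.} The main point to handle carefully is that $\mathsf{tot}(m)$ is defined through the copied face rather than recomputed. I should also confirm $n\neq m$: the guard has $\mathsf{st}(n)=\textsc{Issued}$ and $\mathsf{st}(m)=\textsc{Blank}$, which forces this.
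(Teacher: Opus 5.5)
Your proof is correct and follows the paper's route. The zero sum comes from the copied face together with $\mathsf{cr}(n)=0$ and $\mathsf{cr}(m)=1$, permanence of $n$ from Lemma~\ref{lem:mono}, and $m$ is frozen because the $\mathsf{lock}$ and $\mathsf{redFlush}$ guards demand $\mathsf{cr}=0$ or a \textsc{Blank} slot. Your induction on $\mathsf{cr}(m)=1\wedge\mathsf{st}(m)=\textsc{Issued}$ makes explicit what the paper leaves implicit, namely that these guards keep failing at every later state; that invariant, not Lemma~\ref{lem:mono}, is what justifies your earlier claim that $m$ never returns to \textsc{Blank}.
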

\begin{proof}
The effect copies the face of $n$ to $m$, so $\mathsf{tot}(m)=\mathsf{tot}(n)$, and sets $\mathsf{cr}(m){=}1$ while $\mathsf{cr}(n){=}0$; thus $\nu(n)=+\mathsf{tot}(n)$ and $\nu(m)=-\mathsf{tot}(n)$, summing to $0$. It sets $\mathsf{st}(n){=}\textsc{Reversed}$, terminal by Lemma~\ref{lem:mono}. For $m$: $\mathsf{lock}$ requires $\mathsf{cr}=0$ and fails on $m$; $\mathsf{redFlush}$ as an original requires $\mathsf{cr}=0$ and fails, while as a credit slot it requires the slot be \textsc{Blank}, but $\mathsf{st}(m)=\textsc{Issued}$. Hence $m$ is frozen.
\end{proof}

\subsection{Machine-Checked Verification}
\label{sec:smtchecker}
Theorems~\ref{thm:unique}--\ref{thm:redflush} are properties of the model of Section~\ref{sec:model}, whose guards and effects are transcribed directly from the contract's \texttt{require} clauses and state writes. To remove the hand-proof as a single point of trust, we additionally encode the state-machine invariants as \texttt{assert} statements in a Solidity model that preserves the registry, ownership, and every guarded transition of Table~\ref{tab:transitions} (abstracting only the invoice-face strings and hashing, which are orthogonal to these safety properties), and discharge them with the Solidity SMTChecker~\cite{smtchecker}. Its constrained-Horn-clause (CHC) engine proves properties \emph{inductively over all reachable transaction sequences and all invoice numbers}, not merely the bounded scenarios of a test suite. The ghost counter that records how often the reimburse body executes yields the direct encoding of Theorem~\ref{thm:unique}: \texttt{assert(reimburseCount[n] == 1)}. The checker reports

\begin{center}
\texttt{\footnotesize CHC: 5 verification condition(s) proved safe!}
\end{center}

\noindent covering reimbursement uniqueness (Theorem~\ref{thm:unique}), reimbursement authorization (Theorem~\ref{thm:authz}), the impossibility of reimbursing a credit or reversed invoice, and red-flush value conservation and monotonicity (Theorem~\ref{thm:redflush}, Lemma~\ref{lem:mono}), all with no counterexamples. As a non-vacuity check, weakening the \texttt{reimburse} guard to also accept an already-\textsc{Reimbursed} invoice makes the engine return a concrete counterexample trace (lock, reimburse, reimburse) that violates \texttt{reimburseCount[n]==1}, confirming the property is genuinely enforced by the guards. The 23-case test suite (Section~\ref{sec:eval}) corroborates the same properties dynamically against the deployed bytecode.

Goals G1--G3 of Section~\ref{sec:threat} are thus refuted by Theorem~\ref{thm:unique}/Corollary~\ref{cor:crossorg}, Theorem~\ref{thm:integrity}, and Theorems~\ref{thm:authz}/\ref{thm:redflush} together with the per-operation guards, respectively, with the state-machine goals additionally machine-checked.

\section{Implementation}
\label{sec:impl}
The system is implemented in three parts (a fourth component, the autonomous agent runtime, is described in Section~\ref{sec:agents}). \textbf{Contract:} \texttt{EInvoice.sol} is a single Solidity~0.8.24 contract (487 lines) compiled with the optimizer and the IR pipeline~\cite{solidity}. Checked arithmetic removes the overflow bug class; the contract has no external calls, no Ether handling, and no unbounded loops in state-changing paths except range minting, whose bound the bureau controls. All transitions emit indexed events; following~\cite{wohrer2018}, access control is expressed as guard-first \texttt{require} clauses, and with no external calls reentrancy is structurally excluded. \textbf{Tests and benchmarks:} a 23-case Hardhat~\cite{hardhat} suite covers every subsystem---happy paths, each access-control rejection, double-issue, double-declare, double-reimbursement, unauthorized red-flush, unlock semantics, forged-digest rejection, and a full lifecycle integration test---and deploy/demo/benchmark scripts reproduce every number in Section~\ref{sec:eval}. A separate SMTChecker model and a one-command verification script (Section~\ref{sec:smtchecker}) reproduce the machine-checked proofs. \textbf{Web application:} a React/ethers.js app provides four dashboards (bureau, seller, buyer, verifier); the buyer sees received invoices rendered as VAT-style faces, verifies them by recomputed digest, and drives the lock--reimburse--unlock protocol, while the verifier reconstructs an invoice's full audit trail from events. Contract revert reasons surface in the interface, so a duplicate reimbursement attempt visibly fails with the contract's own message.

\section{Evaluation}
\label{sec:eval}

\subsection{Setup}
Experiments run on a commodity laptop (Intel Core i9-14900HX, 32~GB RAM, Linux~6.17) against Hardhat's in-process EVM and, for the end-to-end demonstration, a local JSON-RPC node. Gas costs are deterministic properties of the EVM and transfer unchanged to any Ethereum-compatible deployment; latency and throughput are properties of the single-node configuration and are read as \emph{contract-execution} bounds, with consensus overhead added in a real network.

\subsection{Functional Correctness}
All 23 tests pass, including the adversarial cases: an unregistered account cannot apply, mint, issue, or receive; a non-holder cannot issue or lock; a non-seller cannot declare or red-flush; a reimbursed invoice cannot be locked, reimbursed again, or red-flushed; a forged face fails digest verification. The complete lifecycle---apply, approve, issue, third-party verify, declare, lock, reimburse---was additionally exercised end-to-end through the web application against the live local chain, including the visible rejection of a duplicate reimbursement attempt.

\subsection{Gas Cost and Complexity}
Table~\ref{tab:gas} reports per-operation gas over the benchmark workload (200 invoices; 50 red-flushes) alongside asymptotic time and storage complexity in the number of invoices $N$; Fig.~\ref{fig:gas} visualizes the costs. One-time deployment costs 3{,}861{,}367~gas.

\begin{table}[t]
\caption{Per-operation gas cost and complexity ($N$ invoices, batch size $k$).}
\label{tab:gas}
\centering
\renewcommand{\arraystretch}{1.12}
\begin{tabular}{@{}l r c c@{}}
\toprule
Operation & Gas (mean) & Time & Storage \\
\midrule
registerEnterprise & 189{,}326 & $O(1)$ & $O(1)$ \\
applyForInvoices & 92{,}918 & $O(1)$ & $O(1)$ \\
grantInvoices (per blank) & 140{,}731 & $O(k)$ & $O(k)$ \\
issueInvoice & 646{,}773 & $O(1)$ & $O(1)$ \\
redFlush & 696{,}130 & $O(1)$ & $O(1)$ \\
declareTax & 32{,}716 & $O(1)$ & $O(1)$ \\
lockForReimbursement & 101{,}902 & $O(1)$ & $O(1)$ \\
reimburse & 32{,}743 & $O(1)$ & $O(1)$ \\
unlockReimbursement & 35{,}031 & $O(1)$ & $O(1)$ \\
\bottomrule
\end{tabular}
\end{table}

\begin{figure}[t]
\centering
\begin{tikzpicture}
\begin{axis}[
  xbar, width=\columnwidth, height=5.2cm,
  xmin=0, xmax=900000,
  scaled x ticks=false,
  bar width=7pt,
  enlarge y limits=0.10,
  xlabel={\footnotesize gas (mean)},
  symbolic y coords={reimburse,declareTax,unlock,applyForInv,lock,grant/blank,registerEnt,issueInvoice,redFlush},
  ytick=data,
  yticklabel style={font=\scriptsize},
  xticklabel style={font=\scriptsize},
  xtick={0,200000,400000,600000,800000},
  xticklabels={0,200k,400k,600k,800k},
  nodes near coords={\scriptsize\pgfmathprintnumber[fixed,precision=0,1000 sep={,}]{\pgfplotspointmeta}},
  nodes near coords style={font=\tiny, /pgf/number format/assume math mode=true},
  every axis plot/.append style={fill=barfill!75, draw=barfill},
]
\addplot coordinates {
  (32743,reimburse) (32716,declareTax) (35031,unlock)
  (92918,applyForInv) (101902,lock) (140731,grant/blank)
  (189326,registerEnt) (646773,issueInvoice) (696130,redFlush) };
\end{axis}
\end{tikzpicture}
\caption{Per-operation gas cost. Issuance and red-flush dominate because they snapshot a full invoice face into permanent storage; the anti-fraud lock/reimburse pair is an order of magnitude cheaper.}
\label{fig:gas}
\end{figure}

Issuance and red-flush dominate because they snapshot both parties' registry data and the item description into permanent storage---the price of a self-contained, immutable face. Crucially, every core operation is $O(1)$ in $N$: costs do not grow as the ledger fills, so the design scales to national invoice volumes at the contract level; only batch granting is linear in its (bureau-chosen) batch size $k$. The anti-fraud protocol is cheap: lock plus reimburse total under 135{,}000~gas, about a fifth of issuance. On a consortium chain gas is only a resource meter; on public Ethereum at 1~gwei and \$3{,}000/ETH, issuance would cost about \$1.94 and reimbursement about \$0.40, motivating consortium or layer-2 deployment for high-volume use.

\subsection{Latency and Throughput}
Against the in-process node, mean end-to-end transaction latency (submission to mined receipt) is 1.6--3.2~ms per operation ($p_{95}\le 8.2$~ms), confirming that contract execution is negligible next to consensus in any realistic deployment. Submitting 100 concurrent issuance transactions completes in 0.73~s, i.e., about 137~invoices/s sustained on a single development node. For scale, China's Shenzhen pilot averaged roughly 0.2~invoices/s in its first year~\cite{cointelegraph2019}, and a consortium deployment can shard by region or bureau, so contract-level throughput is not the binding constraint; consensus configuration is.

\section{Autonomous Agents over a Verified Invoice Ledger}
\label{sec:agents}

Invoice matching, expense reimbursement, and tax preparation are among the first enterprise workflows being delegated to LLM-based agents~\cite{taxagent2025}, and agents that hold keys and transact on blockchains are an active research area whose central open problem is \emph{verifiable policy enforcement}: how to bound what an autonomous, possibly misbehaving agent can do with delegated authority~\cite{alqithami2026,gong2026}. Recent guardrail systems synthesize and verify bespoke policies around the agent~\cite{veriguard2025,shieldagent2025}. Our setting admits a simpler answer: \textbf{the guardrail already exists and is already verified --- it is the tax authority's own contract.} An agent participates as a registered account, which is precisely the adversary of the threat model (Section~\ref{sec:threat}); no additional mechanism, and no trust in the LLM, is needed for the safety properties to hold.

\begin{corollary}[Agent safety]
\label{cor:agent}
Let an autonomous agent control the keys of one or more registered accounts and issue an arbitrary sequence of contract calls (arising, e.g., from LLM outputs, software faults, or prompt injection). Then no behavior of the agent can (i) cause any invoice to be reimbursed more than once, (ii) cause a face to verify that differs from the recorded one, or (iii) effect a lifecycle transition the controlled accounts are not authorized to perform.
\end{corollary}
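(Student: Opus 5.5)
The plan is a reduction: show that the agent is an instance of the adversary $\mathcal{A}$ of Section~\ref{sec:threat}, and then read each clause off the theorem that already handles that goal. First I would fix the agent model. The agent (LLM, deterministic policy layer, tool runtime, and any prompt-injected inputs) is treated as an arbitrary, possibly adaptive, PPT procedure. It holds the keys of a set $S\subseteq R$ of registered accounts and outputs signed transactions. From the contract's view the only observable is the resulting sequence of transactions, each with an authenticated sender in $S$. This is exactly the power granted to $\mathcal{A}$: arbitrary registered keys, arbitrary transactions, any order, adaptively. The internal provenance of a call (LLM output, bug, injected prompt) never enters any guard of Table~\ref{tab:transitions}. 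Any honest accounts, or other agents, acting concurrently are absorbed the same way. By consensus safety all their transactions are serialized into a single execution from $\sigma_0$. The agent cannot sign for accounts outside $S$ (EUF-CMA), cannot break $H$, and cannot break consensus, so every hypothesis of the theorems holds.

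Then I would discharge the clauses in order. For (i), any execution containing the agent's transactions is an execution in the sense of Def.~\ref{def:lts}. Theorem~\ref{thm:unique} therefore gives at most one $\mathsf{reimburse}(\cdot,n)$ per invoice, and Corollary~\ref{cor:crossorg} covers multiple controlled accounts, parallel agent instances and races. For (ii), a face the agent presents or fabricates (for example a forged receipt) is an output $\phi$ of a PPT procedure. Theorem~\ref{thm:integrity} therefore converts any verifying $\phi\neq F(D(H(\phi)))$ into a second preimage, which makes this event negligible. For (iii), I would case-split over the operations of Table~\ref{tab:transitions}. The reimbursement operations ($\mathsf{lock}$, $\mathsf{reimburse}$, $\mathsf{unlock}$) follow from Theorem~\ref{thm:authz} together with the $O(n)=c$ guard on $\mathsf{lock}$. $\mathsf{redFlush}$ follows from its $\mathsf{sel}(n)=c$ guard and Theorem~\ref{thm:redflush}. $\mathsf{issue}$ follows from $O(n)=c\wedge \mathsf{st}(n)=\textsc{Blank}$, $\mathsf{declare}$ from $c=\mathsf{sel}(n)$, and $\mathsf{register}$ and $\mathsf{mint}$ from $c=b$, where $b\notin S$ because the bureau key is honest for admission. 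In each case a successful transition certifies that its sender in $S$ satisfied the authorization predicate; otherwise it reverts with $\sigma$ unchanged.

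The main obstacle is making "authorized" in (iii) precise enough that the statement is neither vacuous nor false. An agent that legitimately holds a buyer key can lock and reimburse that buyer's invoices, including ones a human principal would object to. The corollary therefore can only claim authorization relative to the contract's guards, not relative to the off-chain intent of the principal. I would state this explicitly: "authorized" means satisfying the guard of Table~\ref{tab:transitions} for some $c\in S$. I would also note that over-limit claims, which are a policy matter, fall outside (iii) unless encoded as a guard. A secondary point is that clause (ii) is probabilistic, so the corollary as a whole holds except with negligible probability. I would phrase the final conclusion that way, inheriting the bound directly from Theorem~\ref{thm:integrity}.
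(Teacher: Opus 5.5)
Your proposal is correct and follows the paper's proof: it reduces the agent to an instance of the threat-model adversary $\mathcal{A}$, then reads (i)--(iii) off Theorem~\ref{thm:unique}/Corollary~\ref{cor:crossorg}, Theorem~\ref{thm:integrity}, and Theorem~\ref{thm:authz} together with the per-operation guards. Your explicit case split for (iii), the guard-relative reading of ``authorized,'' and the ``except with negligible probability'' qualifier on (ii) make precise what the paper's terse proof leaves implicit; they are refinements, not a different route (the paper instead adds that the SMTChecker result quantifies over all transaction sequences).
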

\begin{proof}
The agent's observable effect on the ledger is a set of signed transactions from registered accounts, i.e., an instance of $\mathcal{A}$ in Section~\ref{sec:threat}. Claims (i)--(iii) are Theorem~\ref{thm:unique}/Corollary~\ref{cor:crossorg}, Theorem~\ref{thm:integrity}, and Theorem~\ref{thm:authz} with the per-operation guards, respectively. The machine-checked result of Section~\ref{sec:smtchecker} quantifies over \emph{all} transaction sequences, hence over all agent behaviors.
\end{proof}

Prompt injection deserves one remark: an attacker who fully controls the agent's inputs (claim texts, attached receipts) controls at most which transactions the agent's account submits --- capabilities the account holder already possesses legitimately. Injection can therefore waste gas or misdirect effort, but cannot cross the contract's authorization or uniqueness boundaries. What the contract does \emph{not} provide is liveness or decision quality: a broken agent can fail to reimburse valid claims or match a claim to the wrong (still-authorized) invoice, which is why we retain a policy layer and a human escalation path.

\subsection{Agent Architecture and Implementation}

\begin{figure}[t]
\centering
\resizebox{0.9\columnwidth}{!}{%
\begin{tikzpicture}[
  font=\footnotesize,
  lay/.style={draw, rounded corners=2pt, align=center, minimum width=6.4cm, minimum height=0.62cm, line width=0.7pt},
  e/.style={-{Stealth[length=2mm]}, line width=0.8pt}
]
\node[lay, draw=appline, fill=appfill] (claims) {expense claims (natural language, attached receipts)};
\node[lay, draw=appline, fill=appfill!55, below=4mm of claims] (reason) {\textbf{reason}: deterministic matcher $\;|\;$ LLM (Claude), structured output};
\node[lay, draw=gold, fill=gold!12, below=4mm of reason] (guard) {\textbf{policy guard} (deterministic): digest, lifecycle, owner, amount, limit};
\node[lay, draw=ctrline, fill=ctrfill, below=4mm of guard] (tools) {\textbf{tool layer}: signed lifecycle transactions only};
\node[lay, draw=acc, line width=1pt, fill=acc!10, below=4mm of tools] (contract) {\texttt{EInvoice.sol} --- verified safety envelope (Thms.~\ref{thm:unique}--\ref{thm:redflush}, CHC-proved)};
\draw[e] (claims) -- (reason);
\draw[e] (reason) -- node[right, font=\scriptsize\itshape]{proposes} (guard);
\draw[e] (guard) -- node[right, font=\scriptsize\itshape]{filters} (tools);
\draw[e] (tools) -- node[right, font=\scriptsize\itshape]{contract enforces} (contract);
\end{tikzpicture}%
}
\caption{Agent stack. The reasoner (possibly an LLM) only \emph{proposes}; a deterministic policy filters; the verified contract is the final, un-bypassable authority.}
\label{fig:agent}
\end{figure}

We implement two agents in a small Node.js runtime (Fig.~\ref{fig:agent}). A \emph{reimbursement agent} runs a perceive--reason--guard--act loop over a queue of natural-language expense claims: it reads its account's invoice holdings, matches each claim to an invoice, applies a deterministic policy (the attached receipt must verify by digest against the same invoice; the invoice must be \emph{Issued}, non-credit, and held by the agent; the claim amount must equal the invoice total; a per-claim spending ceiling applies), and only then executes the on-chain lock--reimburse protocol, recording a structured trace. Reasoning is \emph{hybrid}: a deterministic matcher (exact amount plus description-token overlap) by default, or an LLM (Claude, with schema-constrained JSON output and a validity check that discards any invoice identifier not in the candidate set) when an API key is configured, falling back to rules on any error. The architecture makes the trust relationship explicit: the reasoner proposes, the policy filters, and the contract enforces. An \emph{audit agent} independently scans the entire ledger, re-verifying every face digest, the tax arithmetic, and red-flush linkage, and flagging identical-content invoice pairs for human review.

\subsection{Case Study}

\begin{table}[t]
\caption{Reimbursement-agent case study (deterministic reasoner; six claims over five seeded invoices; local chain).}
\label{tab:agent}
\centering
\renewcommand{\arraystretch}{1.15}
\resizebox{\columnwidth}{!}{%
\begin{tabular}{@{}l l l l@{}}
\toprule
Claim & Nature & Decided by & Outcome \\
\midrule
C1--C3 & legitimate & contract & reimbursed on-chain \\
C4 & duplicate of C1 & policy & rejected: ``Reimbursed, not Issued'' \\
C5 & over spending limit & policy & rejected: exceeds ¥5{,}000 ceiling \\
C6 & forged receipt (amounts $\times 2$) & policy & rejected: digest verification fails \\
C4$'$ & duplicate, \emph{policy bypassed} & \textbf{contract} & reverted: ``not available for reimbursement'' \\
\midrule
\multicolumn{4}{@{}l}{Unsafe actions executed on-chain: \textbf{0 of 4} attempts. Audit: 6 invoices, 0 findings.}\\
\bottomrule
\end{tabular}%
}
\end{table}

Table~\ref{tab:agent} summarizes an end-to-end run. All three legitimate claims were matched and reimbursed; the duplicate, over-limit, and forged claims were rejected by the policy with human-readable reasons; and when we deliberately replaced the policy with one that approves everything, the duplicate reimbursement was rejected \emph{by the contract itself}, exactly as Corollary~\ref{cor:agent} predicts. Re-running with the LLM reasoner can change only which invoice a claim is matched to --- never the safety outcomes --- because every unsafe action is rejected downstream of the reasoner. The run is scripted and reproducible (\texttt{agent/demo.js}), and twelve additional offline tests exercise the reasoner, the policy, both agents, and the policy-bypass path.

\section{Discussion and Limitations}
\label{sec:discussion}
\textbf{Privacy.} Invoice faces are stored in plaintext contract storage, acceptable on a permissioned consortium chain with restricted read access but not on a public chain. The digest-verification design already points to the remedy: store only commitments on chain and keep faces off chain, or apply selective encryption; zero-knowledge proofs could further allow verifying properties (e.g., ``total below limit, not yet reimbursed'') without revealing the face.

\textbf{Transaction authenticity.} Our guarantees cover invoice \emph{integrity and lifecycle}, not the reality of the underlying sale---a seller can still invoice a fictitious transaction (goal outside Section~\ref{sec:threat}). Anchoring digests of orders, logistics, and payment records and requiring their presence at issuance would raise the cost of fictitious invoicing; integration with payment systems (as in the Shenzhen deployment~\cite{tencent2018}) closes this gap in practice.

\textbf{Deployment path.} The contract runs unmodified on public, private, and consortium EVM chains. The realistic production path is a consortium chain operated by tax authorities with enterprise nodes, where finality is fast and gas is unpriced; a layer-2 rollup anchored to a public chain is an alternative offering public verifiability at lower cost. Storage growth (roughly 1--2~KB per face) favors the commitment-on-chain variant at national volumes.

\textbf{Legal integration.} Invoice numbering, red-flush semantics, and declaration follow Chinese VAT practice; the state machine itself is jurisdiction-neutral, and the EU's structured e-invoicing mandate~\cite{vida2025} makes the lifecycle-on-ledger approach broadly relevant.

\section{Conclusion}
\label{sec:conclusion}
We presented a complete, formally-grounded blockchain electronic invoice system on Ethereum: a five-subsystem design covering the full invoice lifecycle, a formal transition-system model, and machine-checkable security arguments proving reimbursement uniqueness (even across organizations), face integrity, red-flush value conservation, and authorization soundness. The design models invoices as non-tradable NFTs governed by an explicit automaton, with a lock-based reimbursement protocol that makes duplicate reimbursement unrepresentable and digest-based verification that makes authenticity checking trustless. The open-source implementation passes an adversarial test suite; measurements show modest, $O(1)$ per-operation costs (646{,}773~gas to issue; $<$135{,}000~gas for the full reimbursement protocol) and throughput far above production pilot volumes. The two pain points that most impede e-invoice adoption on the consumption side---duplicate reimbursement and costly verification---are eliminated by construction. Beyond human users, we showed the verified contract serves as a ready-made safety envelope for autonomous AI agents: because an agent is just a registered account, agent safety follows as a corollary of the same theorems, and our agentic case study confirms that duplicate, forged, and over-limit claims are rejected even when the agent's own safeguards are bypassed. Future work targets the privacy layer (commitments and zero-knowledge verification), anchoring of transaction evidence at issuance, richer agent autonomy (issuance- and declaration-side agents, multi-agent negotiation), and a multi-node consortium evaluation with PBFT-class consensus.

\section*{Artifact Availability}
The smart contract, SMTChecker verification model, four-role web application, autonomous-agent runtime, test suite (23 contract cases plus 12 agent cases), and the deploy, demonstration, and benchmark scripts that reproduce every number reported here are available under the MIT license at \url{\artifacturl}. The one-command scripts \texttt{verification/verify.sh} and \texttt{agent/demo.js} reproduce, respectively, the machine-checked result of Section~\ref{sec:smtchecker} and the agent case study of Table~\ref{tab:agent}.

\bibliographystyle{IEEEtran}
\bibliography{refs}

\end{document}